\newcommand{\braces}[1]{\left\{ #1 \right \}}
\newtheorem{proposition}{Proposition}
\newtheorem{lemma}{Lemma}
\begin{document}

\title{A Generator for Generalized Inverse Gaussian Distributions}

\author{\name Xiaozhu Zhang \email xiaozhu.zhang@duke.edu \\
       \addr Department of Statistical Science\\
       Duke University\\
       Durham, NC 27708, USA
       \AND
       \name Jerome P. Reiter \email jreiter@duke.edu \\
       \addr Department of Statistical Science\\
       Duke University\\
       Durham, NC 27708, USA}

\editor{TBD }

\maketitle

\begin{abstract}%   <- trailing '%' for backward compatibility of .sty file

We propose a new generator for the generalized inverse Gaussian (GIG) distribution by decomposing the density of GIG into two components. The first component is a truncated inverse Gamma density, in order to sample from which we improve the traditional inverse CDF method. The second component is the product of an exponential pdf and an inverse Gamma CDF. In order to sample from this quasi-density, we develop a rejection sampling procedure that adaptively adjusts the piecewise proposal density according to the user-specified rejection rate or the desired number of cutoff points. The resulting complete algorithm enjoys controllable rejection rate and moderate setup time. It preserves efficiency for both parameter varying case and large sample case.

\end{abstract}

\begin{keywords}
  Generalized Inverse Gaussian Distribution, Rejection Sampling, Random Variate Generation
\end{keywords}

\section{Introduction}

The generalized inverse Gaussian (GIG) distribution was first formulated by~\cite{halphen1941nouveau}, yet it can be traced back to 1915 when Schr\"odinger and Smoluchowski independently derived the density of a certain type of Brownian motion~\citep[see][]{seshadri2012inverse}. This distribution was rediscovered and then popularized in the seventies by~\cite{barndorff1977infinite}, \cite{barndorff1977exponentially}, \cite{barndorff1978first} and \cite{barndorff1978hyperbolic} through several applications that modeled the size of diamonds, incomes and turbulence. Recently the GIG distribution has become popular in quantitative risk management of financial engineering~\citep{mcneil2015quantitative}. In addition to the broad applications, the theoretical properties of GIG were further investigated by~\cite{jorgensen1982statistical} and Seshadri~\cite{seshadri2012inverse}.

The density of generalized inverse Gaussian (GIG) distribution is given by
$$
g(y|\lambda, \psi, \chi) = \frac{(\chi/ \psi)^{\lambda/2} }{2 K_\lambda(\sqrt{\psi\chi})} y^{\lambda-1} \exp\left( -\frac{1}{2} \left( \frac{\chi}{y} + \psi y \right) \right) \cdot \mathbf{1}_{\{y>0\}} ,
$$
where $K_\lambda$ denotes the modified Bessel function of the third kind with index $\lambda$~\citep[see][]{abramowitz1948handbook}. The domain of variation for the parameters is
\begin{align}\label{for:range}
\lambda \in \mathbb{R}, \quad
(\psi, \chi) \in 
\begin{cases}
\braces{ (\psi, \chi): \psi >0, \chi \geq 0 }, & \text{if $\lambda >0$},  \\
\braces{ (\psi, \chi): \psi >0, \chi > 0 }, & \text{if $\lambda = 0$}, \\
\braces{ (\psi, \chi): \psi \geq 0, \chi > 0 }, & \text{if $\lambda < 0$}.
\end{cases}
\end{align}
Particularly, the gamma distribution is a special case of this family with $(\chi = 0, \lambda > 0)$, and the inverse gamma distribution is a special case of this family with $(\psi = 0, \lambda < 0)$. Thus if we exclude these two cases we get an alternative parametrization by setting $\alpha = \sqrt{\psi/\chi}$ and $\beta = \sqrt{\psi\chi}$: Let $X = \alpha Y$, we have
$$
g'(x|\lambda, \beta) \propto x^{\lambda-1} \exp\left( -\frac{\beta}{2} (x + \frac{1}{x}) \right) \cdot \mathbf{1}_{\{x>0\}}.
$$
where $\beta > 0$. In addition, we know that if $X \sim \mathcal{GIG}(\lambda, \psi, \chi)$, then $1/X \sim \mathcal{GIG}(-\lambda, \chi, \psi)$.

Many approaches have been developed to generate variates from this distribution since it is proposed. The most direct approach is based on the traditional inverse CDF method, such as the Gauss-Lobatto integration and Newton interpolation proposed by~\cite{leydold2011generating}. The inverse CDF is approximated by polynomials with user-specified tolerated error, although its setup procedure is expensive and thus cannot be efficiently applied to varying parameter case such as inside a Gibbs sampler. 

Another series of GIG generators falls into the category of rejection sampling. The early work includes the multi-part exponential proposal density developed by \cite{atkinson1982simulation}, but the algorithm requires to numerically solve a bivariate optimization problem and the rejection rate is less than 50\% when $\lambda \leq 0.5$ and $\beta \leq 0.1$. Perhaps the most popular sampler is the Ratio-of-Uniforms method proposed by~\cite{dagpunar1989easily}. It works well when $\lambda \geq 1$ or $\beta \geq 0.5$ but does not have a uniformly bounded rejection constant. This algorithm is available as \texttt{ghyp::rgig()} on CRAN. In addition to that, \cite{dagpunar2007simulation} established a gamma proposal density by maximizing the overall acceptance rate. The algorithm itself is easy to implement but the rejection rate is greater than 50\% only when $\beta < \lambda$.

Fortunately, based on the successful Ratio-of-Uniforms method, \cite{hormann2014generating} established an improved version specifically for the difficult case $\lambda < 1$ and $\beta < 0.5$. This new procedure manages to balance the acceptance rate with a fast setup and induces a uniformly bounded rejection constant. This algorithm is also available on CRAN as \texttt{GIGrvg::rgig()}, which is deemed to be the most efficient function that generates GIG variates in current R packages according to our numerical experiments.

In this paper we propose a new generation method for the GIG distribution that adaptively controls the rejection rate given a user-specified upper bound. This new procedure chooses the complexity of proposal density automatically according to the desired rejection rate, and thus provides a tool for users to manually and freely balance between the acceptance rate and setup efficiency. The paper is organized as follows: In Section 2, we derive the GIG distribution as a marginal distribution of two components (1) a truncated inverse Gamma density, and (2) a quasi-density as the product of an exponential PDF and an inverse gamma CDF. In order to sample from the first component, in Section 3, we improve the inverse CDF method that samples from a truncated Gamma (or truncated inverse Gamma) distribution. In Section 4, we develop a rejection sampling procedure, which based on a used-specified rejection rate, adaptively adjusts the proposal density to sample from the second component. In Section 5, we formulate a complete sampling algorithm for GIG by combining the procedures developed in previous sections, and then we conduct time complexity analysis and make comparisons with other existing algorithms.

\section{GIG as a Marginal Distribution}
In this section we derive the GIG density as a marginal density of two components when $\lambda < 0$. For any $X\sim \mathcal{GIG}(\lambda > 0, \psi, \chi)$, we may regard it as $X = 1/Z$ where $Z \sim \mathcal{GIG}(-\lambda < 0, \chi, \psi)$. Therefore, this marginal property of GIG is applicable to the whole domain of $\lambda$ except for the point 0. 

\begin{proposition}\label{pro:marginal}
If 
\begin{enumerate}[label = (\arabic*)]
    \item The random variable $Y$ follows a distribution with density 
    \begin{equation}\label{for:target-den}
    p_Y(y) \propto \exp(-\frac{\beta}{2} y) \cdot \int_{0}^y x^{\lambda-1}  \exp(-\frac{\beta}{2x}) dx ;
    \end{equation}
    
    \item  The random variable $X|Y \sim \mathcal{IG}(-\lambda, \beta/2) \cdot \mathbf{1}_{\{ x\in(0,y) \}} $ where $\lambda <0, \beta >0$.
\end{enumerate}
Then $X \sim \mathcal{GIG}(\lambda, \beta, \beta)$.
\end{proposition}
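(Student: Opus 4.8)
The plan is to form the joint density of $(X,Y)$ as $p_{X,Y}(x,y)=p_Y(y)\,p_{X\mid Y}(x\mid y)$ and then integrate out $y$. The key observation is that the conditional in assumption~(2), being a \emph{truncated} inverse Gamma, carries a normalizing constant equal to $\int_0^y x^{\lambda-1}\exp(-\beta/(2x))\,dx$ in its denominator, and this is exactly the factor multiplying $\exp(-\beta y/2)$ in $p_Y$ from~\eqref{for:target-den}. These cancel, and the joint density collapses to a simple separated form supported on $\{0<x<y\}$.

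Concretely, I would first record that $\mathcal{IG}(-\lambda,\beta/2)$ has density proportional to $x^{-(-\lambda)-1}\exp(-\beta/(2x))=x^{\lambda-1}\exp(-\beta/(2x))$ on $x>0$ (a valid density since $-\lambda>0$), so
$$p_{X\mid Y}(x\mid y)=\frac{x^{\lambda-1}\exp(-\beta/(2x))}{\int_0^y t^{\lambda-1}\exp(-\beta/(2t))\,dt}\,\mathbf{1}_{\{0<x<y\}}.$$
Multiplying by $p_Y(y)$ and cancelling the common integral gives $p_{X,Y}(x,y)\propto x^{\lambda-1}\exp(-\beta/(2x))\,\exp(-\beta y/2)\,\mathbf{1}_{\{0<x<y\}}$. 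Since the indicator forces $y>x$, integrating $y$ over $(x,\infty)$ yields $\int_x^\infty \exp(-\beta y/2)\,dy=(2/\beta)\exp(-\beta x/2)$, hence
$$p_X(x)\propto x^{\lambda-1}\exp\!\left(-\frac{\beta}{2}\Big(x+\frac1x\Big)\right)\mathbf{1}_{\{x>0\}},$$
which is precisely the $\mathcal{GIG}(\lambda,\beta,\beta)$ density. Because the integrand is nonnegative, Tonelli's theorem licenses the interchange with no extra integrability hypotheses, and the finiteness of the implied normalizing constant follows a posteriori from the fact that the GIG density is integrable for the stated parameter range.

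There is essentially no hard step here; the proof is a one-line cancellation followed by a one-dimensional integration. The only things to watch are bookkeeping of the inverse-Gamma parametrization (so that the shape $-\lambda$ correctly produces the exponent $\lambda-1$) and the orientation of the truncation region (that $y$ is integrated from $x$ up to $\infty$, not the reverse). If a more streamlined write-up is wanted, one can state the argument abstractly as: $(X,Y)$ has joint density $\propto g'(x\mid\lambda,\beta)^{1/?}$—more precisely $\propto\big[x^{\lambda-1}\exp(-\beta/(2x))\big]\cdot\big[\exp(-\beta y/2)\big]\cdot\mathbf{1}_{\{x<y\}}$, a product of two univariate factors restricted to an order cone, so that each marginal is obtained by integrating out the other coordinate.
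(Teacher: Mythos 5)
Your argument is correct and is essentially identical to the paper's proof: both cancel the truncation normalizer of the conditional $\mathcal{IG}(-\lambda,\beta/2)\cdot\mathbf{1}_{\{0<x<y\}}$ against the integral factor in $p_Y$, then integrate $\exp(-\beta y/2)$ over $y\in(x,\infty)$ to recover the $\mathcal{GIG}(\lambda,\beta,\beta)$ kernel. The extra remarks on the inverse-Gamma parametrization and Tonelli are fine but add nothing beyond the paper's one-line computation.
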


\begin{proof}
The pdf of $X$ is:
\begin{align*}
\begin{split}
    p_X(x) 
    &= \int_{0}^\infty p(x|y) p(y) dy \\
    &= C \int_{0}^\infty \frac{1}{F(y)} \cdot x^{\lambda-1} \exp(-\frac{\beta}{2x}) \cdot \mathbf{1}_{\{0<x<y\}} \cdot \exp(-\frac{\beta}{2} y) \cdot F(y) dy \\
    &= C x^{\lambda-1} \exp(-\frac{\beta}{2x})  \int_{x}^\infty \exp(-\frac{\beta}{2} y) dy \\
    &= C' x^{\lambda-1} \exp\left( -\frac{\beta}{2} (x + \frac{1}{x}) \right),
\end{split}
\end{align*}
where $C'$ is a normalizing constant not associated with $x$. 
\end{proof}

Proposition \ref{pro:marginal} shows that we are able to sample from any GIG with $\lambda <0$ if we know (1) how to sample from the truncated distribution $\mathcal{IG}(-\lambda, \beta/2) \cdot \mathbf{1}_{x\in(0,y)}$, which is discussed in Section 3; and (2) how to sample from density \ref{for:target-den}, which is addressed in Section 4.

\section{Sampling From a Truncated Inverse Gamma Distribution}

In this section, we focus on the sampling algorithm of truncated inverse Gamma distributions with the form $\mathcal{IG}(a,b) \cdot \mathbf{1}_{\{ 0 < x< t \}}$.
Note that the inverse Gamma distribution can be regarded as the reciprocal Gamma distribution, so it is equivalent to sample $ X \sim \mathcal{G}(a,b) \cdot \mathbf{1}_{\{ x > 1/t \}}$ since $1/X \sim \mathcal{IG}(a,b) \cdot \mathbf{1}_{\{ 0 < x< t \}}$.

Perhaps the most popular and common approach to sample from a uni-variate truncated distribution is the inverse-CDF method. Suppose that $F(\cdot)$ is the CDF of $\mathcal{G}(a,b)$, then for any uniformly distributed random variable $U\sim \mathcal{U}(F(1/t), 1)$, we have $F^{-1}(U) \sim \mathcal{G}(a,b) \cdot \mathbf{1}_{\{ x > 1/t \}}$. However, when $F(1/t)$ is so close to 1 such that the \texttt{pgamma(1/t)} function in R gives exactly 1, for example, the inverse CDF method is impractical. 

We propose an improved inverse CDF method (see Algorithm \ref{algo:inverse-cdf}) to overcome the obstacles described above.

\begin{algorithm}[!htb] 
    \caption{Improved Inverse CDF method \texttt{truncated.gamma()}} 
    \label{algo:inverse-cdf}
    \KwIn{Shape parameter $a$, rate parameter $b$, truncated position $t$;}
    % 1
    Find $p$ such that $F(t) = 1 - \exp(p)$;
    \\
    % 2
    Sample $Y\sim \mathsf{Ex}(1)$; \\
    % 3
    Let $V \leftarrow Y - p$; \\
    % 4
    Find $X$ such that $V = -\log(1 - F(X))$; \\
    % 5 return
    \KwOut{A random variate $X\sim \mathcal{G}(a,b) \cdot \mathbf{1}_{\{ x > t \}} $.  }
\end{algorithm}

The following lemma provides validity for the Algorithm \ref{algo:inverse-cdf}.

\begin{lemma} \label{lem:trunc}
\ 
\begin{enumerate}[label = (\arabic*)]

    \item Suppose that $Y \sim \mathsf{Ex}(1)$, then $Y-p \sim \mathsf{Ex}(1) \cdot \mathbf{1}_{\{x > -p\}}$.
    
    \item Suppose that $-\log(1-U) \sim \mathsf{Ex}(1) \cdot \mathbf{1}_{\{x > -p\}}$, then $U\sim \mathsf{Unif}(1 - e^p, 1)$.
\end{enumerate}
\end{lemma}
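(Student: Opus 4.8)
The plan is to prove the two parts of Lemma~\ref{lem:trunc} by direct manipulation of distribution functions, since each part is a one-dimensional change-of-variable statement. For part~(1), I would compute the CDF of $Y - p$ where $Y \sim \mathsf{Ex}(1)$. For $x > -p$ we have $\Pr(Y - p \le x) = \Pr(Y \le x + p) = 1 - e^{-(x+p)}$, using $x + p > 0$ so that the argument lies in the support of $Y$. The claim is that this equals the CDF of $\mathsf{Ex}(1)$ conditioned to exceed $-p$. That conditional CDF, for $x > -p$, is $\big(\Pr(Y \le x+p) - \Pr(Y \le 0)\big)/\Pr(Y > 0)$; but here one must be slightly careful about the normalization, because the memorylessness of the exponential is exactly what makes $Y - p$ (for $p \ge 0$) again $\mathsf{Ex}(1)$ without truncation, whereas for $p < 0$ the shift genuinely introduces a truncation at $-p > 0$. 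So I would split into the cases $p \ge 0$ and $p < 0$, or — cleaner — just verify that in both cases the density of $Y - p$ on $\{x > -p\}$ is proportional to $e^{-x}$ and carries total mass one over that set, which is the definition of $\mathsf{Ex}(1) \cdot \mathbf{1}_{\{x > -p\}}$ as the paper uses the notation (an exponential renormalized to the half-line $x > -p$, equivalently a shifted exponential started at $-p$).

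For part~(2), I would invert the relationship. We are told $-\log(1 - U) \sim \mathsf{Ex}(1) \cdot \mathbf{1}_{\{x > -p\}}$; write $W = -\log(1-U)$, so $U = 1 - e^{-W}$. Since $W$ is supported on $(-p, \infty)$, the variable $U = 1 - e^{-W}$ is supported on $(1 - e^{p}, 1)$, which already matches the claimed range. To identify the distribution as uniform on that interval, I would push the density forward: if $W$ has density $f_W(w) \propto e^{-w}$ on $w > -p$ with normalizing constant $1/e^{p}$ (i.e. $f_W(w) = e^{p} e^{-w}$, which integrates to $1$ over $(-p,\infty)$), then with $u = 1 - e^{-w}$, $w = -\log(1-u)$, $dw/du = 1/(1-u)$, we get $f_U(u) = f_W(-\log(1-u)) \cdot \frac{1}{1-u} = e^{p}(1-u)\cdot\frac{1}{1-u} = e^{p}$, a constant on $(1 - e^{p}, 1)$, whose length is $1 - (1 - e^{p}) = e^{p}$ — so it integrates to one and $U \sim \mathsf{Unif}(1 - e^{p}, 1)$ as claimed. (Note the statement writes $1 - e^{p}$; this is consistent with $p \le 0$, i.e. $e^{p} \le 1$, which is the regime that arises in Algorithm~\ref{algo:inverse-cdf} where $p$ is defined by $F(t) = 1 - e^{p}$ with $F(t) \in (0,1)$.)

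I do not anticipate a genuinely hard step here; this is a routine verification. The one point that needs care — and which I would state explicitly rather than gloss over — is the meaning of the notation $\mathsf{Ex}(1) \cdot \mathbf{1}_{\{x > -p\}}$ and the sign of $p$: when $p < 0$ this is an honest truncation of a standard exponential to $(-p,\infty)$, and the normalizing constant is $e^{p}$; when $p \ge 0$ the set $\{x > -p\}$ contains the whole support of $\mathsf{Ex}(1)$ plus a piece of $(-p,0]$, and the "truncated exponential" is really the shifted exponential, still with density $e^{p}e^{-x}$ on $x > -p$. Fixing this normalization constant once, consistently, makes both parts fall out immediately, and it is the hinge that connects the two parts into the correctness proof of the algorithm (part~(1) says the shifted draw $V = Y - p$ has the right law; part~(2), read in reverse, says that solving $V = -\log(1 - F(X))$ returns $X$ with $F(X) \sim \mathsf{Unif}(F(t), 1)$, hence $X \sim \mathcal{G}(a,b)\cdot\mathbf{1}_{\{x>t\}}$).
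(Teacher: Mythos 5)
Your proof is correct and follows essentially the same route as the paper's: a direct, routine verification via distribution functions (the paper checks both parts by computing CDFs; your density--pushforward argument for part~(2) is an equivalent computation, and your attention to the normalization of $\mathsf{Ex}(1)\cdot\mathbf{1}_{\{x>-p\}}$ matches the $e^p$ denominators appearing in the paper's CDF formulas). One small internal slip: having correctly identified the normalizing constant as $1/e^p=e^{-p}$, you then write $f_W(w)=e^p e^{-w}$ and hence $f_U(u)=e^p$; these should read $f_W(w)=e^{-p}e^{-w}$ and $f_U(u)=e^{-p}$, so that $f_U$ is constant and integrates to one over the interval of length $e^p$, exactly as you intend.
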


Specifically, the step 1 in Algorithm \ref{algo:inverse-cdf} can be achieved by the command \texttt{pgamma(q = 1/t, shape = a, rate = b, lower.tail = FALSE, log.p = TRUE)} in R. The step 3 in Algorithm \ref{algo:inverse-cdf} generates a variate $V$ that shares the same distribution of $-\log(1 - U)$ if $U\sim \mathsf{Unif}(F(t), 1)$, and so $F^{-1}(1 - \exp(-V))$ generates the desired truncated inverse gamma variate $X$. However, when $V$ is extremely large, the function \texttt{qgamma(1 - exp(-V), shape = a, rate = b)} might returns exactly 1. Hence, we have to generate $X$ through the log scale by the command \texttt{qgamma(p = -V, shape = a, rate = b, lower.tail = FALSE, log.p = TRUE)}.

\section{Rejection Sampling}

In this section, we focus on the sampling algorithm of $Y$ with the quasi-density
$$
f(y) = \frac{\beta}{2} \exp(-\frac{\beta}{2} y) \cdot 
\frac{\int_{0}^y x^{\lambda-1}  \exp(-\frac{\beta}{2x}) dx }
{\int_{0}^\infty x^{\lambda-1}  \exp(-\frac{\beta}{2x}) dx }
= h(y) \cdot F(y),
$$
where $h(y)$ is the density of $\mathsf{Ex}(\beta / 2)$, whereas $F(y)$ is the CDF of $\mathcal{IG}(-\lambda, \beta / 2)$. 

\begin{figure}[ht]
    \centering
    \includegraphics[width=0.6\textwidth]{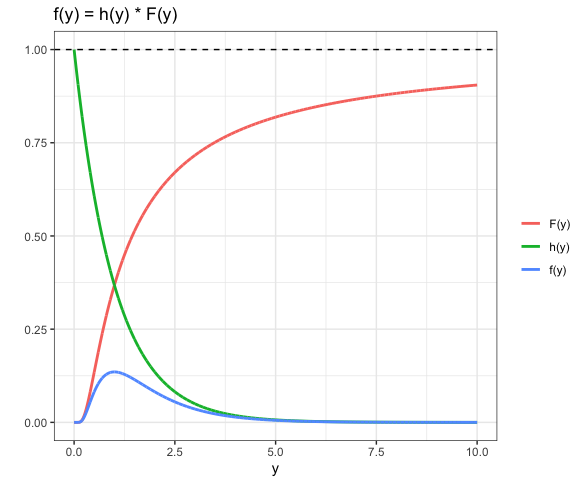}
    \caption{The plot of $f(y) = h(y) F(y)$ when $\lambda = -1$, $\beta = 2$. }
    \label{fig:f(y)}
\end{figure}

\subsection{A na\"ive version}
In order to sample $Y$ with the quasi-density $f(y)$, we may choose $h(y)$ as a proposal density and treat $F(y)$ as the acceptance rate. However, as shown in Figure \ref{fig:f(y)}, when $y$ is close to 0, which should occur very frequently according to the green curve, the acceptance rate will be close to 0, resulting in an unacceptably high rejection rate. The table \ref{tab:acc-rate-1} shows the acceptance rates for different $(\lambda, \beta)$ if $h(y)$ is chosen as the proposal density. Generally speaking, the acceptance rate would go up as $\beta$ decreases or $|\lambda|$ increases; however, the acceptance rate would be around only 0.5\% when $\beta = 0.1$ and $\lambda$ = -0.001.

\begin{table}[h]
\centering
\caption{The mean acceptance rates (standard deviation) with $N = 30$ replicates for different $(\beta, \lambda)$ using the naive version. }
\begin{tabular}{lcccc}
\hline
\multicolumn{1}{c}{} & $\lambda = -0.001$ & $\lambda = -0.01$ & $\lambda = -0.1$ & $\lambda = -1$ \\ \hline
$\beta$ = 0.0001     & 0.018 (0.000)      & 0.171 (0.002)     & 0.845 (0.003)    & 1.000 (0.000)  \\
$\beta$ = 0.001      & 0.014 (0.000)      & 0.131(0.001)      & 0.754 (0.004)    & 1.000 (0.000)  \\
$\beta$ = 0.01       & 0.009 (0.000)      & 0.090 (0.001)     & 0.610 (0.003)    & 1.000 (0.000)  \\
$\beta$ = 0.1        & 0.005 (0.000)      & 0.047 (0.000)     & 0.385 (0.002)    & 0.986 (0.001)  \\ \hline
\end{tabular}
\label{tab:acc-rate-1}
\end{table}

\subsection{A tighter envelope}
We have to design a much nicer proposal density. Suppose that $\mathbf k \in\mathbb{R}^K$, $K\geq 1$ is a vector of cutpoints with $0 < \mathbf k_i < \mathbf k_j < \infty$ for any $i < j$, then we define a step function $F^*(y)$ as an envelope of $F(y)$:
$$
F^*(y) = \sum_{i=0}^{K} F(\mathbf k_{i+1}) \cdot \mathbf{1}_{\braces{ \mathbf k_i \leq y < \mathbf k_{i+1}  }}
$$
with $\mathbf k_0 = 0$ and $\mathbf k_{K+1} = +\infty$. Since $F^*(y) \geq F(y)$ for any $y$, the function $f^*(y) = h(y)F^*(y)$ is also an envelope of $f(y)=h(y)F(y)$ and thus it can work as a proposal density. In addition, the function $f^*(y)$ should be much smaller than the na\"ive proposal density $h(y)$ as $F^*(y) < 1$ for any $y < \mathbf k_{K}$. For example, in Figure \ref{sfig:tigher}, the blue curve is the much tighter proposal density $f^*(y)$. Compared with the red curve $h(y)$, particularly when $y$ is close to 0, the rejection region is clearly much smaller.

\begin{figure}[ht]
     \centering
     \begin{subfigure}[b]{0.49\textwidth}
         \centering
         \includegraphics[width=\linewidth]{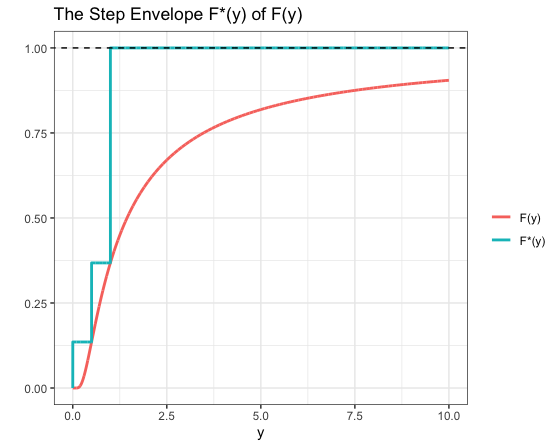}
         \caption{Approximating $F(y)$ by step function $F^*(y)$. }
         \label{sfig:step}
     \end{subfigure}
     \hfill
     \begin{subfigure}[b]{0.49\textwidth}
         \centering
         \includegraphics[width=\linewidth]{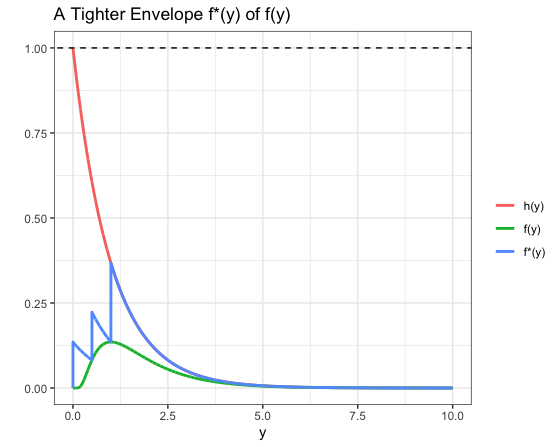}
         \caption{A tighter proposal density $f^*(y)$ than $h(y)$. }
         \label{sfig:tigher}
     \end{subfigure}
     \caption{A tight proposal density $f^*(y) = h(y) \cdot F^*(y)$ when $\lambda = -1$, $\beta = 2$ and the cutpoints are $(0.5, 1, 1.5)^\top$. }
     \label{fig:envelope}
\end{figure}

The proposal density $f^*(y)$ is also a piecewise function:
$$
f^*(y) = \sum_{i=0}^{K} F(\mathbf k_{i+1}) h(y) \cdot \mathbf{1}_{\braces{ \mathbf k_i \leq y < \mathbf k_{i+1}  }}.
$$
For each piece, $F(\mathbf k_{i+1}) h(y) \cdot \mathbf{1}_{\{ \mathbf k_i \leq y < \mathbf k_{i+1} \}}$ is proportional to the density of the truncated exponential distribution $\mathsf{Ex}(\beta / 2) \cdot \mathbf{1}_{\{ k_i \leq y < k_{i+1} \}}$. In addition, since 
$$
\frac{f^*(y)}{ \int_\mathbb{R} f^*(y) dy} =
\frac{\sum_{i=0}^{K} F(\mathbf k_{i+1}) h(y) \cdot \mathbf{1}_{\braces{ \mathbf k_i \leq y < \mathbf k_{i+1}  }}}
{\sum_{i=0}^{K} F(\mathbf k_{i+1})  \int_{\mathbf k_i}^{\mathbf k_{i+1}} h(y) dy  } =
\sum_{i=0}^{K} z_i \cdot \frac{h(y)}{\int_{\mathbf k_i}^{\mathbf k_{i+1}} h(y) dy} \cdot \mathbf{1}_{\braces{ \mathbf k_i \leq y < \mathbf k_{i+1}  }}
$$
where 
$$
z_i = \frac{ F(\mathbf k_{i+1}) \int_{\mathbf k_i}^{\mathbf k_{i+1}} h(y) dy } 
{  \sum_{i=0}^{K} F(\mathbf k_{i+1})  \int_{\mathbf k_i}^{\mathbf k_{i+1}} h(y) dy  },
$$
we may sample any variate $Y$ with the quasi-density $f^*(y)$ by: 
\begin{enumerate}[label = \underline{Step \arabic*}:]
    \item Sample $Z \sim \mathsf{Multinomial}(z_0, \cdots, z_{K})$;
    \item Sample $Y|Z=z \sim \mathsf{Ex}(\beta/2) \cdot \mathbf{1}_{\{ \mathbf k_z \leq y < \mathbf k_{z+1} \}}$.
\end{enumerate}
Furthermore, the rejection rate for any $y\in\mathbb{R}_+$ is given by
$$
\frac{f(y)}{f^*(y)} = \frac{h(y)F(y)}
{ \sum_{i=0}^{K} F(\mathbf k_{i+1}) h(y) \cdot \mathbf{1}_{\braces{ \mathbf k_i \leq y < \mathbf k_{i+1}  }} } = 
\sum_{i=1}^K \frac{F(y)}{F(\mathbf k_{i+1})} \cdot \mathbf{1}_{\braces{ \mathbf k_i \leq y < \mathbf k_{i+1}  }}.
$$

\begin{algorithm}[!htb] 
    \caption{Rejection Sampling \texttt{rejection.sampling()}} 
    \label{algo:rej-samp}
    \KwIn{Number of variates $N$, parameters $\lambda$, $\beta$ and cutoff points $\mathbf k = (\mathbf k_1, \cdots,\mathbf  k_K)$ with $\mathbf k_0 = 0$ and $\mathbf k_{K+1} = +\infty$;}
    % 1
    Initialization: $T = 0$; \\
    Find $F(\mathbf k_{i+1})$ and $\int_{\mathbf k_1}^{\mathbf k_{i+1}} h(y)dy$ for $i\in\braces{0,\cdots,K}$;
    \\
    % 2
    Find 
    $$
    z_i =  \frac{F(\mathbf k_{i+1}) \int_{\mathbf k_i}^{\mathbf k_{i+1}} h(y) dy }  {\sum_{i=0}^{K} F(\mathbf k_{i+1})  \int_{\mathbf k_i}^{\mathbf k_{i+1}} h(y) dy}
    $$ for $i\in\braces{0,\cdots,K}$ \\
    % 3
    \While{$T < N$}{
        Sample $Z \sim \mathsf{Multinomial}(z_0, \cdots, z_{K})$ and let $(l, u) \leftarrow (\mathbf k_Z, \mathbf k_{Z+1})$ ; \\
        % 4
        Sample $Y' \sim \mathsf{Ex}(1)$ and let $Y \leftarrow Y' \mod (u-l) + l$; \\
        % 5 
        Find the rejection rate $R(Y)$ where
        $$
        R(y) = \sum_{i=1}^K \frac{F(y)}{F(\mathbf k_{i+1})} \cdot \mathbf{1}_{\braces{ \mathbf k_i \leq y < \mathbf k_{i+1}  }} ;
        $$ \\
        Sample $U \sim \mathsf{Unif}(0,1)$; \\
        \If{$U \leq R(Y)$}{
        $Y_T \leftarrow Y$; \\
        $T \leftarrow T + 1$; \\
        }
    }
    % 6
    \KwOut{A collection of random variates $\{Y_1, \dots, Y_N \}$ with the quasi-density $f(y)$.  }
\end{algorithm}

The Algorithm \ref{algo:rej-samp} demonstrates the construction of this piecewise envelope and the corresponding rejection sampling procedure given specific cutoff points $\mathbf{k}$. In addition, the Lemma \ref{lem:rej-samp} shows that the random variate $Y$ generated by step 6 in Algorithm \ref{algo:rej-samp} follows the distribution $Y|Z=z \sim \mathsf{Ex}(\beta/2) \cdot \mathbf{1}_{\{ \mathbf k_z \leq y < \mathbf k_{z+1} \}}$.

\begin{lemma}
\label{lem:rej-samp}
\ 
\begin{enumerate}[label = (\arabic*)]
    \item Suppose that $Y \sim \mathsf{Ex}(\lambda) \cdot \mathbf{1}_{0< y < b-a}$, then $Y + a \sim \mathsf{Ex}(\lambda) \cdot \mathbf{1}_{\{ a< y < b \}}$;
    
    \item Suppose that $Y \sim \mathsf{Ex}(\lambda)$, then $Y \mod a \sim \mathsf{Ex}(\lambda) \cdot \mathbf{1}_{\{ 0 < y< a \}}$.
\end{enumerate}
\end{lemma}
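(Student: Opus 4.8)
The plan is to verify both claims by a direct comparison of distribution functions (or densities), since each is an elementary transformation of an exponential law and no deeper machinery is needed.

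For part (1) I would first write the density of the source variable, the truncated exponential $\mathsf{Ex}(\lambda)\cdot\mathbf{1}_{\{0<y<b-a\}}$, as $\lambda e^{-\lambda y}/(1-e^{-\lambda(b-a)})$ on $(0,b-a)$, and then push it through the shift $W=Y+a$ to obtain the density of $W$ on $(a,b)$, namely $\lambda e^{-\lambda(w-a)}/(1-e^{-\lambda(b-a)})$. It then remains to check that this agrees with the density of $\mathsf{Ex}(\lambda)\cdot\mathbf{1}_{\{a<y<b\}}$, which is $\lambda e^{-\lambda w}/(e^{-\lambda a}-e^{-\lambda b})$; multiplying numerator and denominator of the latter by $e^{\lambda a}$ makes the two expressions literally identical. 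The conceptually cleaner argument I would actually present is that this is nothing but the memorylessness of the exponential on a bounded window: conditioning an $\mathsf{Ex}(\lambda)$ variable to lie in $(a,b)$ is the same as adding $a$ to an $\mathsf{Ex}(\lambda)$ variable conditioned to lie in $(0,b-a)$.

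For part (2) I would decompose $Y\sim\mathsf{Ex}(\lambda)$ according to its position relative to the lattice $a\mathbb{Z}_{\ge 0}$: set $N=\lfloor Y/a\rfloor$ and $R=Y\bmod a=Y-aN$, so that for $0\le r<a$ the event $\{R\le r\}$ is the disjoint union $\bigcup_{n\ge 0}\{na\le Y\le na+r\}$. Summing the band probabilities gives $\mathbb{P}(R\le r)=\sum_{n=0}^{\infty}\bigl(e^{-\lambda na}-e^{-\lambda(na+r)}\bigr)=(1-e^{-\lambda r})\sum_{n=0}^{\infty}e^{-\lambda na}$, and evaluating the geometric series as $1/(1-e^{-\lambda a})$ yields $\mathbb{P}(R\le r)=(1-e^{-\lambda r})/(1-e^{-\lambda a})$. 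This is exactly the CDF of $\mathsf{Ex}(\lambda)\cdot\mathbf{1}_{\{0<y<a\}}$ evaluated at $r$, and since $R$ is supported on $[0,a)$ this pins down its law.

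There is no real obstacle here; the only care needed is bookkeeping. In part (1) one must keep the supports aligned so that the shifted density is compared with the truncated exponential on the matching interval $(a,b)$ rather than with an un-normalized tail. In part (2) the one mildly non-routine step is the summation over the countably many bands $\{na\le Y\le na+r\}$; the point to emphasize is that the memoryless structure forces these probabilities into a geometric progression whose sum reproduces precisely the normalizing constant $1-e^{-\lambda a}$ of the target truncated density. Once the two CDFs are seen to coincide on their common support, both parts of the lemma follow.
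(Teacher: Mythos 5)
Your proposal is correct and follows essentially the same route as the paper: part (1) is the same elementary change-of-variables argument (you phrase it with densities, the paper with CDFs, which is an immaterial difference), and part (2) is the identical band decomposition $\{na < Y \leq na + r\}$ summed as a geometric series, yielding the CDF $(1-e^{-\lambda r})/(1-e^{-\lambda a})$ of the truncated exponential. No gaps; incidentally, your geometric-series evaluation also fixes a sign typo in the paper's displayed denominator, which should read $1-e^{-\lambda a}$.
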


In order to exam the correctness of the rejection sampling procedure in Algorithm \ref{algo:rej-samp} when we use $f^*(y)$ as the proposal density, we sample from the distribution $\mathcal{GIG}(-0.1, 1, 1)$ with 100000 draws and compare their summary statistics with the actual density. The Figure \ref{fig:check-density} shows the histogram of these 10000 draws, which perfectly aligns with the blue actual density curve. In addition, in Table \ref{tab:compare-stat}, we contrast the actual quantiles and the actual mean which are computed by numerical integrals, with the statistics computed based on the 100000 draws. The simulated ones are similar to the actual ones, which justifies our algorithm \ref{algo:rej-samp} as a valid independent GIG variates generator.

\begin{figure}[ht]
    \centering
    \includegraphics[width=0.6\textwidth]{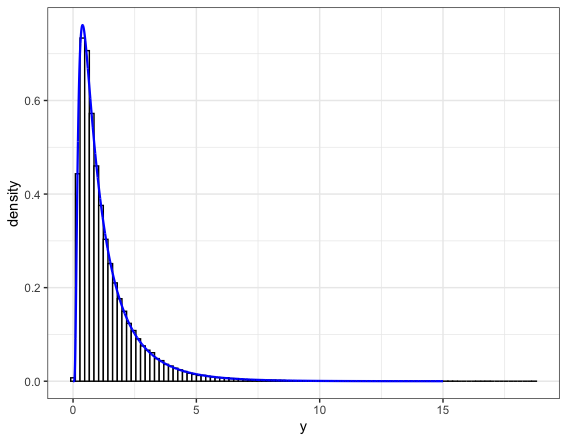}
    \caption{The histogram of varieties from $\mathcal{GIG}(-0.1, 1, 1)$ generated using Algorithm \ref{algo:rej-samp}. The blue curve is the actual density. }
    \label{fig:check-density}
\end{figure}

\begin{table}[ht]
\centering
\caption{The comparisons between actual statistics and simulated statistics computed from the 100000 draws of $\mathcal{GIG}(-0.1, 1, 1)$. }
\begin{tabular}{lllllll}
\hline
Quantiles            & \multicolumn{1}{c}{10\%} & \multicolumn{1}{c}{25\%} & \multicolumn{1}{c}{Median} & \multicolumn{1}{c}{Mean} & \multicolumn{1}{c}{75\%} & \multicolumn{1}{c}{90\%} \\ \hline
Actual statistics    & 0.3045                & 0.5048                & 0.9235                  & 1.3325                 & 1.7020                  & 2.8672                 \\
Simulated statistics & 0.3049                & 0.5081                & 0.9204                  & 1.3309                 & 1.7069                & 2.8682                \\ \hline
\end{tabular}
\label{tab:compare-stat}
\end{table}

\subsection{Cutoff points given rejection rate}

Although the tighter envelope $f^*(y)$ reduces the rejection rate, its particular construction as well as the resulting rejection rate depends on the choice of cutoff points. In this section, we will develop an algorithm that adaptively chooses the cutoff points given a user-specified rejection rate $\epsilon_0$.

\begin{figure}[ht]
     \centering
     \begin{subfigure}[b]{0.49\textwidth}
         \centering
         \includegraphics[width=\linewidth]{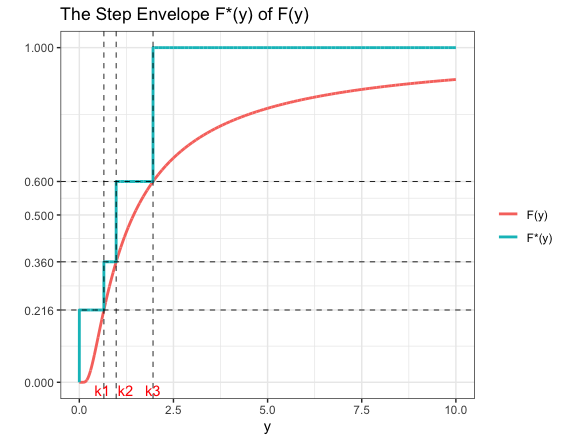}
         \caption{Choosing cutoff points by $F(\mathbf k_3) = 1 - \epsilon_0/2$, $F(\mathbf k_2) = (1 - \epsilon_0/2)^2$ and $F(\mathbf k_3) = (1 - \epsilon_0/2)^3$. }
         \label{sfig:fp-step}
     \end{subfigure}
     \hfill
     \begin{subfigure}[b]{0.49\textwidth}
         \centering
         \includegraphics[width=\linewidth]{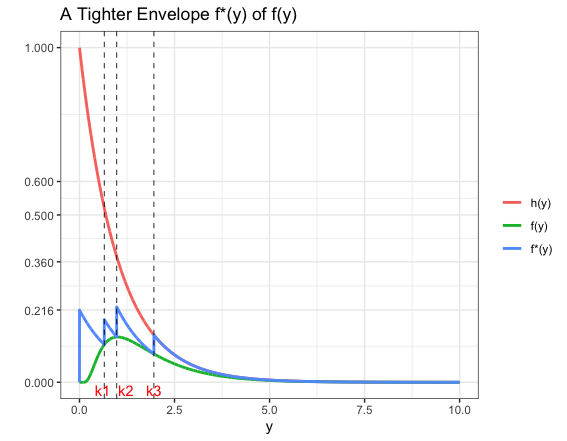}
         \caption{The resulting proposal density whose first piece is chosen with probability less than $\epsilon_0/2$. }
         \label{sfig:fp-tigher}
     \end{subfigure}
     \caption{The process of choosing cutoff points given rejection rate when $\lambda = -1$, $\beta = 2$ and $\epsilon_0 = 0.8$. }
     \label{fig:find-point}
\end{figure}

We illustrate our strategy through an example in the Figure \ref{fig:find-point} where $\lambda = -1$, $\beta = 2$ and $\epsilon_0 = 0.8$:
\begin{enumerate}[label = \underline{Step \arabic*}:]
    \item Find the point $\mathbf k_3$ such that $F(\mathbf k_3) = 1 - \epsilon_0 / 2$, and set $\mathbf k_3$ as a cutoff point. Hence for any $y\in[\mathbf k_3, \infty)$, the rejection rate is controlled under $\epsilon_0/2$;
    
    \item Find the point $\mathbf k_2$ such that $F(\mathbf k_2) = (1-\epsilon_0/2)^2$, and set $\mathbf k_2$ as a cutoff point. Hence for any $y \in[\mathbf k_2, \mathbf k_3)$, the rejection rate is also controlled under $\epsilon_0/2$;
    
    \item Find the point $\mathbf k_1$ such that $F(\mathbf k_1) = (1-\epsilon_0/2)^3$, and set $\mathbf k_1$ as a cutoff point. Hence for any $y \in[\mathbf k_1, \mathbf k_2)$, the rejection rate is also controlled under $\epsilon_0/2$;
    
    \item Now we have $z_0 \leq \epsilon_0/2 $ (where $z_0$ is in the Step 3 of Algorithm \ref{algo:rej-samp}, the proportion of area of the first piece for $f^*(y)$), so the overall rejection rate is less than
    $$
    \frac{\epsilon_0}{2} + (1 - \frac{\epsilon_0}{2}) \cdot \frac{\epsilon_0}{2} \leq \epsilon_0.
    $$
    
\end{enumerate}

The strategy describes a procedure where we start from the density $h(y)$, find a cutoff point $\mathbf k^*$, pull down the $[0, \mathbf k^*)$ part of the density by the constant $(1 - \epsilon_0/2)$, and repeat this on a finer resolution. We formulate this procedure into the following algorithm:
\begin{algorithm}[!htb] 
    \caption{Find cutoff points given rejection rate\\ \texttt{find.cutoff.under.rej.rate()} } 
    \label{algo:find-points}
    \KwIn{Parameters $\lambda$, $\beta$ and rejection rate $\epsilon_0$;}
    % 1
    Initialization: $\alpha \leftarrow 1 - \epsilon_0 /2 $, $A_l \leftarrow 1$, $A_r \leftarrow 0$, $k_1 \leftarrow \infty$, ${\rm CUTPOINTS} \leftarrow [\ ]$;
    \\
    \While{ $A_l > (A_l + A_r) \cdot \epsilon_0 / 2$ }{
        $\mathbf k_0 \leftarrow F^{-1}(\alpha)$; \\
        ${\rm CUTPOINTS} \leftarrow [\mathbf k_0, {\rm CUTPOINTS}]$; \\
        % 4
        $\alpha \leftarrow \alpha \cdot (1 - \epsilon_0 / 2)$; \\
        % 5 
        Let
        $$
        P_{A_l} \leftarrow \frac{\int_0^{\mathbf k_0} h(y) dy}{\int_0^{\mathbf k_1} h(y) dy};
        $$ \\
        $A_r \leftarrow A_r + (1-P_{A_l}) \cdot A_l$;
        \\
        $A_l \leftarrow A_l \cdot P_{A_l} \cdot (1 - \epsilon_0 / 2) $; \\
        $k_1 \leftarrow k_0$;
    }
    % 6
        \KwOut{A vector of cutoff points CUTPOINTS.  }
\end{algorithm}

We have to show that there must exist a time such that $A_l \leq (A_l + A_r) \cdot \epsilon_0 / 2$; in other words, we have to ensure that the Algorithm \ref{algo:find-points} will terminate itself and the resulting vector has finitely many elements. Now denote the potential new cutoff point $\mathbf k_0$ at the $n$-th iteration by the sequence $\braces{x_n}$ where $x_n = F^{-1}((1 - \epsilon_0 / 2)^n)$, then the sequence is monotonically decreasing and $x_n \searrow 0 $. On the other hand, at the $n$-th iteration, the resulting proposal density is given by
$$
f^*_n(y) = \sum_{i=0}^{n} F(x_{i+1}) h(y) \cdot \mathbf{1}_{\braces{ x_i \leq y < x_{i+1}  }}.
$$
Then the point such that $A_l = (A_l + A_r) \cdot \epsilon_0 / 2$ can be represented by a sequence $\braces{y_n}$ such that:
$$
\frac{ \int_{0}^{y_n} f^*_n(y) dy }
{ \int_{0}^{+\infty} f^*_n(y) dy } = \frac{\epsilon_0}{2}.
$$

Now we show that $\{y_n\}$ is a monotonically increasing sequence. Note that 
$$
f^*_{n+1}(y) = 
\begin{cases}
f^*_{n}(y), & y \geq x_{n+1}, \\
(1 - \epsilon_0 / 2) f^*_{n}(y), & 0 \leq y < x_{n+1}.
\end{cases}
$$
When $x_{n+1} < y_n \leq x_n$, we have
\begin{align*}
\begin{split}
 \frac{ \int_{0}^{y_{n}} f^*_{n+1}(y) dy }
 { \int_{0}^{+\infty} f^*_{n+1}(y) dy } 
&= \frac{ \int_{0}^{x_{n+1}} f^*_{n+1}(y) dy + \int_{x_{n+1}}^{y_{n}} f^*_{n+1}(y) dy  }
 { \int_{0}^{x_{n+1}} f^*_{n+1}(y) dy + \int_{x_{n+1}}^{y_{n}} f^*_{n+1}(y) dy + 
 \int_{y_{n}}^{+\infty} f^*_{n+1}(y) dy}  \\
&= \frac{ (1 - \epsilon_0/2) \int_{0}^{x_{n+1}} f^*_{n}(y) dy + \int_{x_{n+1}}^{y_{n}} f^*_{n}(y) dy  }
 { (1 - \epsilon_0/2) \int_{0}^{x_{n+1}} f^*_{n}(y) dy + \int_{x_{n+1}}^{y_{n}} f^*_{n}(y) dy + 
 \int_{y_{n}}^{+\infty} f^*_{n}(y) dy} \\
&< \frac{ \epsilon_0/2 \int_{0}^{x_{n+1}} f^*_{n}(y) dy +
    (1 - \epsilon_0/2) \int_{0}^{x_{n+1}} f^*_{n}(y) dy + \int_{x_{n+1}}^{y_{n}} f^*_{n}(y) dy  }
    { \epsilon_0/2 \int_{0}^{x_{n+1}} f^*_{n}(y) dy +  (1 - \epsilon_0/2) \int_{0}^{x_{n+1}} f^*_{n}(y) dy + \int_{x_{n+1}}^{y_{n}} f^*_{n}(y) dy + 
    \int_{y_{n}}^{+\infty} f^*_{n}(y) dy} \\
&= \frac{ \int_{0}^{y_{n}} f^*_{n}(y) dy }
 { \int_{0}^{+\infty} f^*_{n}(y) dy },
\end{split}
\end{align*}
implying that we must have $y_{n+1} > y_n$ such that
$$
\frac{ \int_{0}^{y_{n+1}} f^*_{n+1}(y) dy }
{ \int_{0}^{+\infty} f^*_{n+1}(y) dy } =
\frac{ \int_{0}^{y_n} f^*_n(y) dy }
{ \int_{0}^{+\infty} f^*_n(y) dy } = 
\frac{\epsilon_0}{2}.
$$
Similarly, when $y_n \leq x_{n+1} < x_n$, we have
\begin{align*}
\begin{split}
 \frac{ \int_{0}^{y_{n}} f^*_{n+1}(y) dy }
 { \int_{0}^{+\infty} f^*_{n+1}(y) dy } 
&= \frac{ \int_{0}^{y_{n}} f^*_{n+1}(y) dy  }
 { \int_{0}^{y_n} f^*_{n+1}(y) dy + \int_{y_n}^{x_{n+1}} f^*_{n+1}(y) dy + 
 \int_{x_{n+1}}^{+\infty} f^*_{n+1}(y) dy}  \\
&= \frac{ (1 - \epsilon_0/2) \int_{0}^{y_{n}} f^*_{n}(y) dy}
 { (1 - \epsilon_0/2) \int_{0}^{y_{n}} f^*_{n}(y) dy + (1 - \epsilon_0/2) \int_{y_{n}}^{x_{n+1}} f^*_{n}(y) dy + 
 \int_{x_{n+1}}^{+\infty} f^*_{n}(y) dy} \\
&< \frac{ \int_{0}^{y_{n}} f^*_{n}(y) dy   }
    { \int_{0}^{y_{n}} f^*_{n}(y) dy + \int_{y_{n}}^{x_{n+1}} f^*_{n}(y) dy + 
 \int_{x_{n+1}}^{+\infty} f^*_{n}(y) dy} \\
&= \frac{ \int_{0}^{y_{n}} f^*_{n}(y) dy }
 { \int_{0}^{+\infty} f^*_{n}(y) dy },
\end{split}
\end{align*}
implying that we must have $y_{n+1} > y_n$ such that
$$
\frac{ \int_{0}^{y_{n+1}} f^*_{n+1}(y) dy }
{ \int_{0}^{+\infty} f^*_{n+1}(y) dy } =
\frac{ \int_{0}^{y_n} f^*_n(y) dy }
{ \int_{0}^{+\infty} f^*_n(y) dy } = 
\frac{\epsilon_0}{2}.
$$

If $x_1 \leq y_1$, then we achieve $A_l \leq (A_l + A_r) \cdot \epsilon_0 / 2$ after the first iteration. Otherwise, if $x_1 > y_1$, then since $\braces{x_n}$ is monotonically decreasing to 0, and $\braces{y_n}$ is monotonically increasing, there must exist $n_0 > 1$ such that $x_{n_0} \leq y_{n_0}$ so at the $n_0$-th iteration we achieve $A_l \leq (A_l + A_r) \cdot \epsilon_0 / 2$.

\begin{table}[ht]
\centering
\caption{The mean acceptance rates (standard deviation) with $N = 30$ replicates for different $(\beta, \lambda=-0.001)$ using Algorithm \ref{algo:find-points}. }
\begin{tabular}{lllll}
\hline
Desired Acceptance Rate  & \multicolumn{1}{c}{0.25} & \multicolumn{1}{c}{0.50} & \multicolumn{1}{c}{0.75} & \multicolumn{1}{c}{0.90} \\ \hline
$\beta = 0.0001$ & 0.797 (0.004)            & 0.830 (0.003)           & 0.932 (0.002)            & 0.973 (0.002)           \\
$\beta = 0.001$  & 0.790 (0.004)            & 0.838 (0.004)           & 0.933 (0.003)            & 0.972 (0.001)           \\
$\beta = 0.01$   & 0.740 (0.004)            & 0.856 (0.003)           & 0.927 (0.002)            & 0.969 (0.002)           \\
$\beta = 0.1$    & 0.771 (0.003)            & 0.835 (0.003)           & 0.912 (0.003)            & 0.962 (0.002)           \\ \hline
\end{tabular}
\label{tab:acc-rate-2}
\end{table}

The table \ref{tab:acc-rate-2} shows the simulated acceptance rates for different $(\lambda, \beta)$ combinations given several desired $1 - \epsilon_0$ using the Algorithm \ref{algo:find-points}. Compared with the first column in Table \ref{tab:acc-rate-1} where we used the naive version, we see that the new algorithm induces a profound improvement in terms of higher acceptance rates. However, we see that the simulated acceptance rates are much higher than the user-specified rates especially when $1-\epsilon_0$ is smaller. This implies that our upper bound $\epsilon_0/2 + (1 - \epsilon_0/2) \cdot \epsilon_0 / 2$ for the actual overall rejection rate is too conservative and too loose, and resources are wasted to set up more cutoff points than necessary. One possible ad-hoc strategy in practice is to feed in $2\epsilon_0$ instead of $\epsilon_0$ to Algorithm \ref{algo:find-points}, and the simulation results are shown in Table \ref{tab:acc-rate-3}. We see that the simulated acceptance rate is still larger than desired but not that large here.

\begin{table}[ht]
\centering
\caption{The mean acceptance rates (standard deviation) with $N = 30$ replicates for different $(\beta, \lambda=-0.001)$ by feeding in Algorithm \ref{algo:find-points} with $2\epsilon_0$. }
\begin{tabular}{lllll}
\hline
Desired Acceptance Rate  & \multicolumn{1}{c}{0.25} & \multicolumn{1}{c}{0.50} & \multicolumn{1}{c}{0.75} & \multicolumn{1}{c}{0.90} \\ \hline
$\beta = 0.0001$ & 0.303 (0.002)            & 0.600 (0.005)           & 0.830 (0.005)            & 0.948 (0.002)           \\
$\beta = 0.001$  & 0.784 (0.004)            & 0.853 (0.004)           & 0.839 (0.003)            & 0.945 (0.003)           \\
$\beta = 0.01$   & 0.602 (0.004)            & 0.636 (0.004)           & 0.855 (0.003)            & 0.942 (0.002)           \\
$\beta = 0.1$    & 0.364 (0.003)            & 0.688 (0.005)           & 0.835 (0.004)            & 0.930 (0.003)           \\ \hline
\end{tabular}
\label{tab:acc-rate-3}
\end{table}

\begin{figure}[ht]
    \centering
    \includegraphics[width=0.7\textwidth]{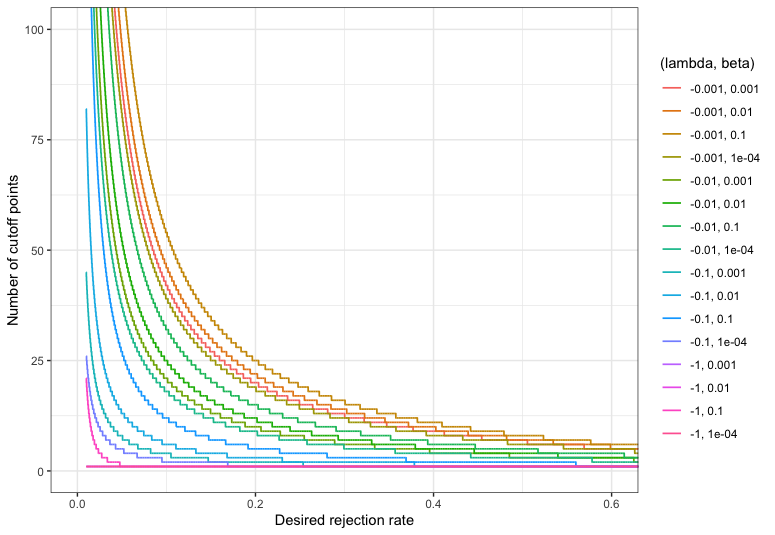}
    \caption{The histogram of varieties from $\mathcal{GIG}(-0.1, 1, 1)$ generated using Algorithm \ref{algo:find-points}. The blue curve is the actual density. }
    \label{fig:monotone}
\end{figure}

In addition to the control of acceptance rates, the Algorithm \ref{algo:find-points} (and its ad-hoc version) also provides a tool to see the trade-off between the rejection rate and the number of cutoff points. In Figure \ref{fig:monotone}, we divide the rejection rate range $(0,1]$ into 50000 equispaced intervals and compute the corresponding number of cutoff points for 16 different $\mathcal{GIG}(\lambda, \beta)$. Based on the plot, we see that the number of cutoff points is a monotonically decreasing function of the desired rejection rate $\epsilon_0$. Particularly, when $\epsilon_0 \rightarrow 1$, the number of cutoff points goes to 0; whereas when $\epsilon_0 \rightarrow 0$, then number of cutoff points goes to infinity.

It is not surprising that a lower desired rejection rate would lead to more cutoff points, as intuitively it may take more iterations for each piece of the proposal density to be tighter to $f(y)$. On the other hand, a smaller number of cutoff points may lead to a higher overall rejection rate, and in the extreme case when we have no cutoff points the Algorithm \ref{algo:find-points} degenerates to the native version.

\subsection{Cutoff points given the number of cutoff points}

We have investigated the strategy to choose cutoff points when the rejection rate is given; however, on the other end of the trade-off spectrum, it is also critical to develop a strategy that adaptively chooses cutoff points given the desired number of points. We hope that the specific cutoff points $\mathbf k$ we obtain given the number of cutoff points $K$, can be consistent with those cutoff points $\mathbf k'$ given the desired rejection rate $\epsilon_0$. In other words, the points $\mathbf k$ should achieve similar guaranteed acceptance rates as those $\mathbf k'$ such that $|\mathbf k| = |\mathbf k'|$. Similarly, the points $\mathbf k'$ should have approximately the same length as that of $\mathbf k$ if $\mathbf k$ and $\mathbf k'$ enjoy similar rejection rate bounds.

Fortunately, the number of cutoff points $K$ is a monotonically decreasing function of the desired rejection rate $\epsilon_0$ as shown in the last section. We call this function $K = K^*(\epsilon_0)$. With this trade-off we may take advantage of binary search to find
$$
\epsilon_K = \inf\braces{ \epsilon: K = K^*(\epsilon) },
$$
and the corresponding exact cutoff points (see Algorithm \ref{algo:find-points-num}).

\begin{algorithm}[!htb] 
    \caption{Find cutoff points given the number of cutoff points \\
    \texttt{find.cutoff.under.fix.num()} } 
    \label{algo:find-points-num}
    \KwIn{Parameters $\lambda$, $\beta$, the number of cutoff points $K$, and threshold $t_0 = 1\times 10^{-6}$;}
    % 1
    Initialization: $l \leftarrow 0$, $r \leftarrow 1$;
    \\
    \While{ $r - l > t_0$ }{
        $m \leftarrow (r + l) / 2$; \\
        % 4
        $\mathbf k \leftarrow \texttt{find.cutoff.under.rej.rate(}\lambda, \beta, m \texttt{)} $; \\
        $K_m \leftarrow |\mathbf k|$; \\
        % 5 
        \uIf{$K_m < K$}{
        $r \leftarrow m$;
        }
        \Else{
         $l \leftarrow m$; \\
        }
    }
    CUTPOINTS $\leftarrow \texttt{find.cutoff.under.rej.rate(}\lambda, \beta, l \texttt{)} $; \\
    % 6
        \KwOut{A vector of cutoff points CUTPOINTS.  }
\end{algorithm}

The Table \ref{tab:acc-rate-4} shows the simulated acceptance rates when the numbers of cutoff points are pre-specified. Note that for any cell in this table the actual number of cutoff points coincides with the corresponding desired number. Consequently, it is clear that with more cutoff points the actual rejection rate decreases, which again reveals their trade-off relationship.

\begin{table}[ht]
\centering
\caption{The mean acceptance rates (standard deviation) with $N = 30$ replicates for different $(\beta, \lambda=-0.001)$ by specifying different desired numbers of cutoff points. }
\begin{tabular}{lcccc}
\hline
Desired \# (Cutoff) & 1             & 5             & 10            & 50            \\ \hline
$\beta = 0.0001$    & 0.018 (0.000) & 0.723 (0.004) & 0.765 (0.005) & 0.959 (0.002) \\
$\beta = 0.001$     & 0.014 (0.000) & 0.713 (0.004) & 0.756 (0.004) & 0.955 (0.002) \\
$\beta = 0.01$      & 0.009 (0.000) & 0.689 (0.004) & 0.740 (0.004) & 0.948 (0.002) \\
$\beta = 0.1$       & 0.005 (0.000) & 0.629 (0.004) & 0.711 (0.004) & 0.929 (0.002) \\ \hline
\end{tabular}
\label{tab:acc-rate-4}
\end{table}

Note that the Algorithm \ref{algo:find-points-num} should be much less efficient than the Algorithm \ref{algo:find-points}, since the function \texttt{find.cutoff.under.rej.rate()}, in general, need to be called for multiple times during the procedure of binary search in Algorithm \ref{algo:find-points-num} when we make use of the non-increasing function $K = K^*(\epsilon_0)$. Sometimes people may need to compare the time complexity of algorithms under different settings by specifying a benchmark, such as a common desired rejection rate (an upper bound for the rejection rate) or a common number of cutoff points. Since it is efficient to evaluate the function $K = K^*(\epsilon_0)$ given $\epsilon_0$ while rather inefficient to find the pre-image of $K^*(\cdot)$ given $K$, we should always use a common desired rejection rate $\epsilon_0$. 

Why do we design the Algorithm \ref{algo:find-points-num}, then, if the binary search procedure is so time-costly?
From Figure \ref{fig:monotone}, we see that when the desired rejection rate $\epsilon_0$ is close to 0, a tiny shift downward in $\epsilon_0$ would result in an extremely large increment in the number of cutoff points. In this case, the time spent on setting up cutoff points might exceed the time saved from a small rejection rate so we have to be careful in terms of controlling the most time-consuming part. In summary, under the setting when the number of cutoff points is significantly effective on the running time and it is affordable to run the binary search, we may use the function \texttt{find.cutoff.under.fix.num()} before the actual sampling procedure.

Additionally, there are cases when we need to compare our algorithm with other existing procedures where the actual rejection rate (or rejection constant) is chosen as a comparing metric. It would make no sense if we control the comparing metric in advance, and thus choosing a common number of cutoff points for each parameter is much more fair for the comparisons if the time complexity is not a major concern.

\section{The Sampling Procedure for GIG}
In this section, we are going to summarize and encapsulate the previous results into a combined sampling procedure for any $\mathcal{GIG}(\lambda, \psi, \chi)$ distributions when $\lambda\neq 0$ and $(\psi, \chi)$ satisfies the condition \ref{for:range}. As stated before we will consider two scenarios, one is the sampling procedure given the desired rejection rate, while the other is that given the number of cutoff points (see Algorithm \ref{algo:combined}).

\begin{algorithm}[!htb] 
    \caption{GIG variates generator   } 
    \label{algo:combined}
    \KwIn{Number of variates $N$, parameters $\lambda$, $\psi$, $\chi$. In addition, either the desired rejection rate $\epsilon_0$ or the desired number of cutoff points $K$ is needed;}
    % 1
    \If{$\lambda > 0$}{
        $(\lambda, \psi, \chi) \leftarrow (-\lambda, \chi, \psi)$;  \\
        SignFlag $\leftarrow$ TRUE; \\
        }
    Reparameterization: $\alpha \leftarrow \sqrt{\psi / \chi}$, and $\beta \leftarrow \sqrt{\psi \chi}$;  \\
    The number of cutoff points:
    \uIf{$\epsilon_0$ is provided}{
        $\mathbf k \leftarrow \texttt{find.cutoff.under.rej.rate(}\lambda, \beta, \epsilon_0$\texttt{)};
    }\Else{
        $\mathbf k \leftarrow \texttt{find.cutoff.under.fix.num(}\lambda, \beta, K, t_0 = 1\times10^{-6}$\texttt{)}; \\
    }
    Extend the cutoff points: $\mathbf k \leftarrow [0, \mathbf k, \infty]$  \\
    Sample from $p_Y(y)$: $ \braces{Y_1, \dots, Y_N} \leftarrow$ \texttt{rejection.sampling($N, \lambda, \beta, \mathbf k$)};  \\
    Sample from $X|Y$: $X_j \leftarrow$ \texttt{truncated.gamma($-\lambda, \beta / 2, 1/Y_j$)} for $j = 1,\dots, N$;  \\
    GIG variate: $X_j \leftarrow 1/(\alpha X_j)$ for $j = 1,\dots, N$;  \\
    \If{SignFlag is TRUE}{
        $X_j \leftarrow 1/X_j$ for $j = 1,\dots, N$;  \\
    }
    % 6
    \KwOut{The variates $\braces{X_1, \cdots, X_N} \overset{\rm i.i.d}{\sim} \mathcal{GIG}(\lambda, \psi, \chi)$.  }
\end{algorithm}

\subsection{Time Complexity Analysis}
In this section, we are going to delve into the time complexity of the combined algorithm to explore the relationship between running time and the user-specified tuple $(\epsilon_0, K)$.

The Algorithm \ref{algo:combined} can be decomposed in the following way:
\begin{enumerate}[label = \underline{Part \arabic*}:]
    \item The line 1 to line 5 and the line 11 take around constant time.
    
    \item The line 6 to line 10 are associated with the number of cutoff points. In particular, if we find the cutoff points given the desired rejection rate (through the function \texttt{find.cutoff.under.rej.rate()} in Algorithm \ref{algo:find-points}), then the time spent in this part can be expressed as $T_1 \cdot K^*(\epsilon_0)$ where $T_1$ is the average time of looking for each cutoff point, and $K^*(\epsilon_0)$ is the resulting number of cutoff points. Note that the actual value of $T_1$ depends on specific machine settings but the time used for each cutoff point should remain similar.
    
    \item The line 12 is associated with the rejection sampling procedure. Particularly, inside the Algorithm \ref{algo:rej-samp} (denoted by \texttt{rejection.sampling()}) the line 1 is of constant time. Additionally, the time spent on line 2 and line 3 can be expressed as $T_2 \cdot K^*(\epsilon_0)$ where $T_2$ is the average time on setting up each point. The time spent on line 4 to line 13 can be expressed as $T_3 \cdot N'$ where $T_3$ is the average time of drawing one variate from the proposal density, and $N'$ is the total number of draws (among which some are accepted and some are rejected). Similarly as above, the actual values of $T_2$ and $T_3$ depend on specific machine settings but the time used for each draw should remain similar.
    
    \item The line 13 to 17 are of linear time of $N$. In other words, the time spent on this part can be expressed as $T_4 \cdot N$, which solely depends on the number of variates. Still, the acutal value of $T_4$ dependes on specific machine settings but the time used for post-processing each variate should remain similar.
\end{enumerate}

Now we ignore the constant time consumption and consider the average generating time for each variate:
$$
\frac{T_{\rm total}}{N} =
\frac{ T_1 \cdot K^*(\epsilon_0) + T_2 \cdot K^*(\epsilon_0) + T_3 \cdot N' + T_4 \cdot N }{N} \approx
(T_1 + T_2) \cdot \frac{K^*(\epsilon_0)}{N} + T_3 \cdot \frac{1}{1-\epsilon_0} + T_4
$$
where we use the desired rejection rate to approximate the actual rejection rate $1 - N/N'$. Note that $K^*(\epsilon_0)$ is a decreasing function of $\epsilon_0$ so the first term is a decreasing function of $\epsilon_0$ as well; however, the second term is an increasing function of $\epsilon_0$. When $\epsilon_0$ goes close to 0 or 1 the average time tends to infinity, implying that there should be a minimum point in between such that $T_{\rm total} / N$ attains its minimum when $N$ is fixed.

We may also see the effect of varying $N$ when $\epsilon_0$ is fixed. When $N$ is small, for example when we are trying to draw only one sample for each iteration of a Gibbs sampler, the part $(T_1 + T_2) \cdot K^*(\epsilon_0)$ representing the cutoff-point-processing time might dominate the average generating time particularly when $\epsilon_0$ is small. However, when $N$ becomes large, for example when we are trying to obtain Monte Carlo estimates through drawing a large sample from GIG, the part $T_3 / (1 - \epsilon_0) + T_4$ will ultimately dominate the average generating time. 

In summary, given the same desired rejection rate $\epsilon_0$, a large sample size should be able to save the average time spent on pre-processing cutoff points as the pre-processing procedure need not be repeated for each subsequent iteration of rejection sampling. 

\begin{figure}[ht]
     \centering
     \begin{subfigure}[b]{0.49\textwidth}
         \centering
         \includegraphics[width=\linewidth]{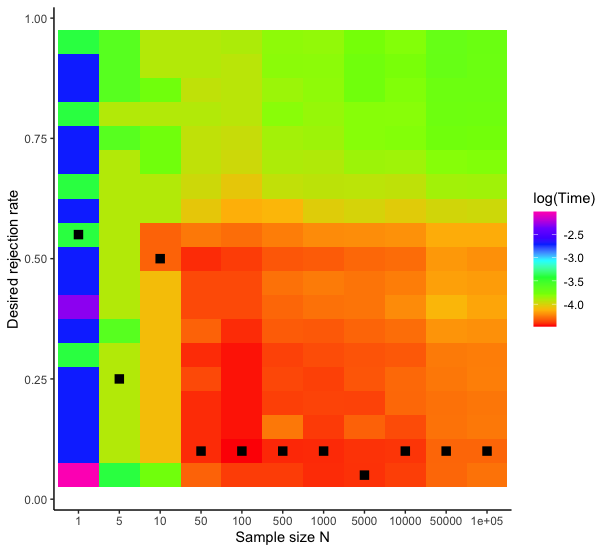}
         \caption{When $\lambda = 0.1$ and $\beta = 0.1$. }
         \label{sfig:time-1}
     \end{subfigure}
     \hfill
     \begin{subfigure}[b]{0.49\textwidth}
         \centering
         \includegraphics[width=\linewidth]{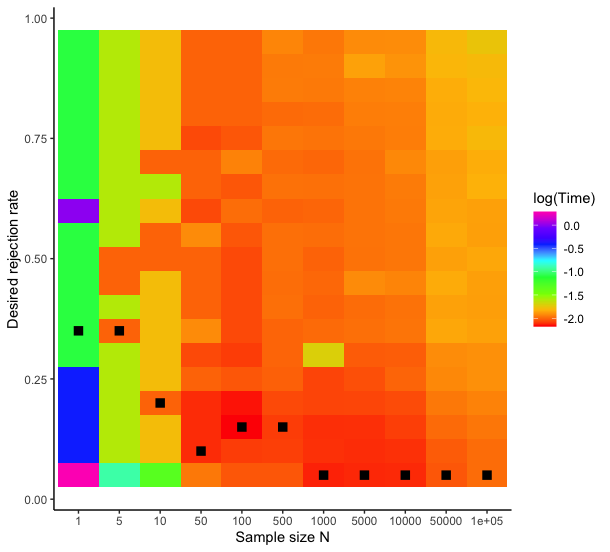}
         \caption{When $\lambda = 1$ and $\beta = 1$. }
         \label{sfig:time-2}
     \end{subfigure}
     \caption{The logarithm of average generating time (sec) under different desired rejection rates $\epsilon_0$ and sample sizes $N$. The black squares represent the minimal values in each column. }
     \label{fig:time-comp}
\end{figure}

The Figure \ref{fig:time-comp} shows simulated average generating time $T_{\rm total} / N$ under different desired rejection rates $\epsilon_0$ and different samples sizes $N$, where the black squares represent the minimal values in each column. The value in each cell is computed as the average value over 30 replicates. Generally speaking, for a small sample size ($N \leq 10$), the best desired rejection rate $\epsilon_0$ which induces the shortest average running time would be around 0.5; while for a large sample size ($N > 1000$), the corresponding best desired rejection rate $\epsilon_0$ would be around 0.05 to 0.1. In addition, the average running time given $\epsilon_0$ is roughly decreasing with increasing $N$. Particularly, a larger sample size $N$ and a smaller $\epsilon_0$ would lead to an overall shorter running time: the smallest values cluster in the right bottom corner of the plots.

These plots again show the trade-off between rejection rate and number of cutoff points. When we need only one variate, it is not worthwhile to spend too much time on setting up cutoff points even though it may result in higher rejection rate. On the other hand, when we need a large sample, the benefits of low rejection rate would exceed the average loss on the pre-processing procedure that deals with cutoff points so a low rejection rate is the best option.

\subsection{Our Sampler v.s. Ratio-of-uniforms}

Previously we provided time complexity analysis and performed running time comparisons between different scenarios for our own approach. Nevertheless, it is always a concern that how our approach would outperform other existing algorithms. As we mentioned in Section 1, \cite{hormann2014generating} improved a commonly used ratio-of-uniforms method for generating GIG variates and provide their uniformly bounded rejection constants for the particularly interested range $\lambda \in (0, 1.5]$ and $\beta \in (0, 1.5]$. The rejection constant can be estimated by \#(variates draw from proposal density) / \#(variates being accepted).

In order to align with their results, we have to treat the rejection constant as the metric as well. Note that in our approach, the rejection constants can be upper bounded by $1 / (1-\epsilon_0)$ and hence are automatically controlled if we specify the desired rejection rate $\epsilon_0$, although might at the price of extremely large number of cutoff points. Another strategy of comparison, with the goal of a fair game, is to fix several different numbers of cutoff points and then compare the rejection constants.

The Figure \ref{fig:out-comp} shows the estimated rejection constants when we sample 50000 independent variates for each parameter $(\lambda, \beta)$. One immediate observation is that the sampling procedure is particularly harder when $\lambda$ is close to 0, where the rejection constants become abruptly large. In addition, higher $\beta$ leads to larger rejection constant while higher $\lambda$ leads to lower rejection constant. Finally, as the number of cutoff points increase, the rejection constants are decreasing. Specifically, the rejection constants are uniformly bounded and smaller than the results in~\cite{hormann2014generating} when $K \geq 20$.

\begin{figure}[ht]
     \centering
     \begin{subfigure}[b]{0.46\textwidth}
         \centering
         \includegraphics[width=\linewidth]{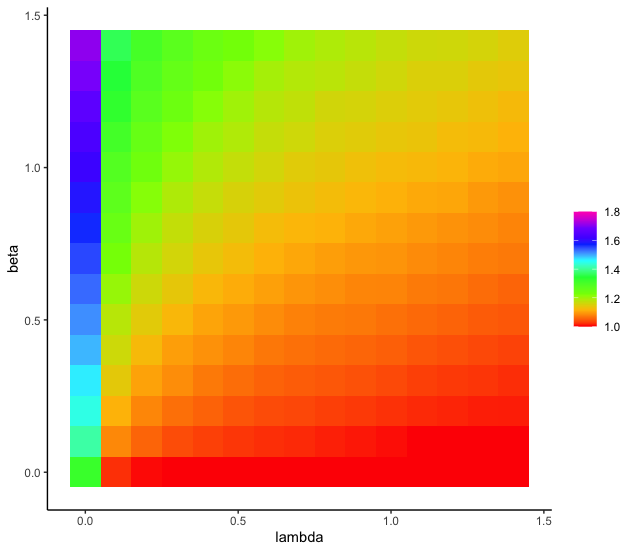}
         \caption{$K = 10$. }
         \label{sfig:comp-1}
     \end{subfigure}
     \hfill
     \begin{subfigure}[b]{0.46\textwidth}
         \centering
         \includegraphics[width=\linewidth]{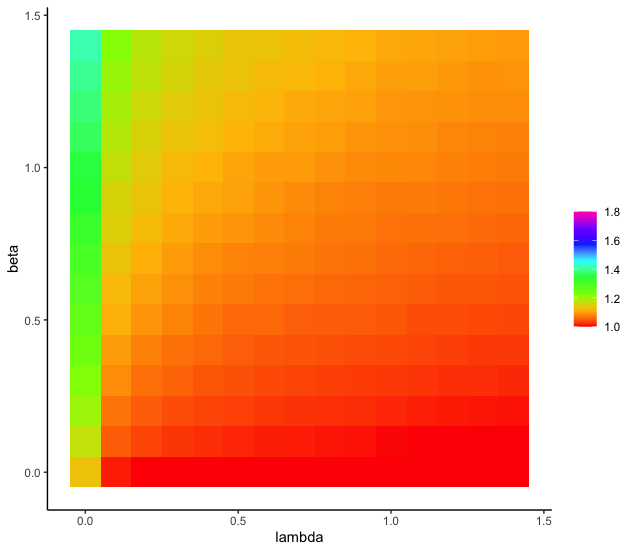}
         \caption{$K = 20$. }
         \label{sfig:comp-2}
     \end{subfigure}
     \hfill
     \begin{subfigure}[b]{0.46\textwidth}
         \centering
         \includegraphics[width=\linewidth]{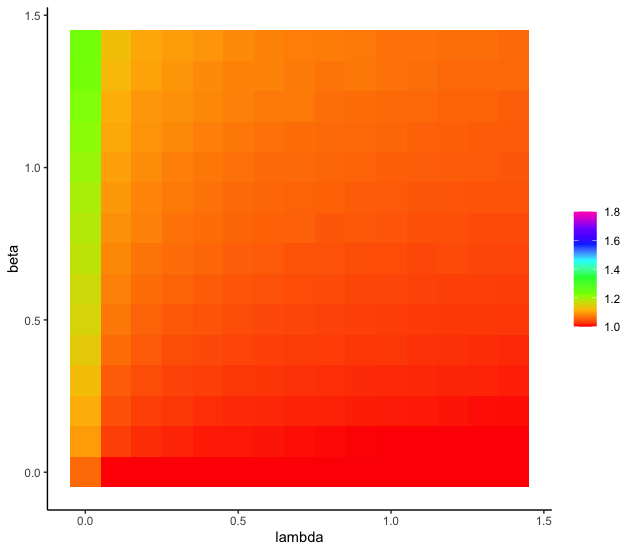}
         \caption{$K = 40$. }
         \label{sfig:comp-3}
     \end{subfigure}
     \hfill
     \begin{subfigure}[b]{0.46\textwidth}
         \centering
         \includegraphics[width=\linewidth]{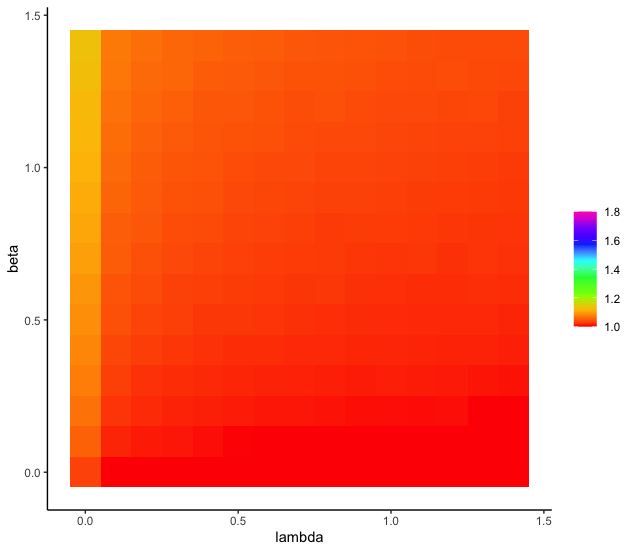}
         \caption{$K = 80$. }
         \label{sfig:comp-4}
     \end{subfigure}
     \hfill
     \begin{subfigure}[b]{0.46\textwidth}
         \centering
         \includegraphics[width=\linewidth]{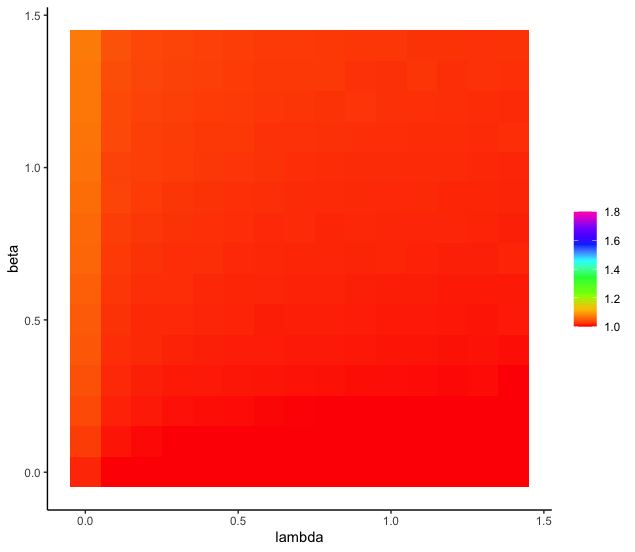}
         \caption{$K = 160$.}
         \label{sfig:comp-5}
     \end{subfigure}
     \hfill
     \begin{subfigure}[b]{0.46\textwidth}
         \centering
         \includegraphics[width=\linewidth]{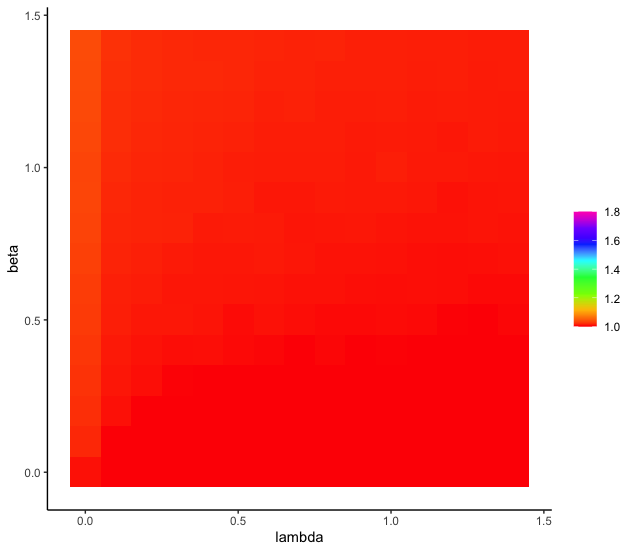}
         \caption{$K = 320$. }
         \label{sfig:comp-6}
     \end{subfigure}
     \caption{The rejection constants under different parameters $(\lambda, \beta)$ and different numbers of cutoff points. }
     \label{fig:out-comp}
\end{figure}

\section{Discussion}
We have proposed a new generation method for GIG distributions that adaptively controls the rejection rate given a user-specified upper bound. Rejection sampling algorithms in general have to balance the acceptance rate with setup efficiency. In our method, this trade-off is demonstrated by the complexity of the piecewise density $f^*(y)$, which can be manually controlled by users through the desired rejection rate $\epsilon_0$ or the desired number of cutoff points $K$. Our proposed sampling approach can deal with both the parameter varying case and the large sample case efficiently, through the powerful automatic adjustment of proposal densities. Simulation results show that when sampling only one variate at a time, the desired rejection rate $\epsilon_0$ can be set around 0.25 to 0.5, while when sampling more than 1000 samples at a time the $\epsilon_0$ can be set around 0.05 to 0.1.

Our algorithm is competitive among other existing packages that generate GIG distributions. Our simulation experiments show that with more than 20 cutoff points, we can achieve uniformly smaller rejection constants than the most recent and most efficient algorithm \citep{hormann2014generating}. However, we never managed to outperform the function \texttt{GIGrvg::rgig()} when we are implementing our algorithm in Rcpp\footnote{See \texttt{https://github.com/Xiaozhu-Zhang1998/GIGcpp}.} and comparing their actual running time in the same R environment, although for certain hard cases (e.g. $\lambda = 10^{-5}$, $\chi = 10^{-7}$, $\psi = 1$) we spent less time than the function $\texttt{ghyp::rgig()}$. We do realize that much improvement is still needed to optimize our current Rcpp codes.

In addition to the specific quasi-density $f(y)$, we believe that our piecewise rejection sampling procedure can be extended to any other target density as long as it can be written as the product of two functions, at least one of which is monotonic increasing and both of which have fast inversion formulae (or algorithms).

%{\noindent \em Remainder omitted in this sample. See http://www.jmlr.org/papers/ for full paper.}

% Acknowledgements should go at the end, before appendices and references

%\acks{We would like to acknowledge support for this project
%from the National Science Foundation (NSF grant IIS-9988642)
%and the Multidisciplinary Research Program of the Department
%of Defense (MURI N00014-00-1-0637). }

% Manual newpage inserted to improve layout of sample file - not
% needed in general before appendices/bibliography.

\newpage

\appendix
\section*{Appendix}
\label{app:theorem}

% Note: in this sample, the section number is hard-coded in. Following
% proper LaTeX conventions, it should properly be coded as a reference:

%In this appendix we prove the following theorem from
%Section~\ref{sec:textree-generalization}:

\noindent \textbf{Proof of Lemma \ref{lem:trunc}}
\begin{proof}
\ 
\begin{enumerate}[label = (\arabic*)]
    \item For any $y\in\mathbb{R}$,
    \begin{align*}
    \begin{split}
      \mathbb{P}(Y-p \leq y) 
    &= \mathbb{P}(Y \leq y + p)  \\
    &= \left( 1 - e^{-(y+p)} \right) \cdot \mathbf{1}_{\{y + p > 0\}}  \\
    &= \frac{e^p - e^{-y}}{e^p} \cdot \mathbf{1}_{\{y > -p\}},
    \end{split}
    \end{align*}
    which is exactly the CDF of $\mathsf{Ex}(1) \cdot \mathbf{1}_{\{x>-p\}}$.

    \item For any $u\in(1-e^p, 1)$,
    \begin{align*}
    \begin{split}
      \mathbb{P}(U \leq u) 
    &= \mathbb{P}(-\log(1-U) \leq -\log(1-u) )  \\
    &= \frac{e^p -e^{\log(1-u)} }{e^p} \cdot \mathbf{1}_{\{-\log(1-u) >-p\}}  \\
    &= \frac{e^p - 1 + u}{e^p} \cdot \mathbf{1}_{\{1 - e^p < u < 1\}},
    \end{split}
    \end{align*}
    which is exactly the CDF of $\mathsf{Unif}(1-e^p, 1)$.
    
\end{enumerate}
\end{proof}

\noindent \textbf{Proof of Lemma \ref{lem:rej-samp}}
\begin{proof}
\ 
\begin{enumerate}[label = (\arabic*)]
    \item For any $a < y <b$, we have
    \begin{align*}
    \begin{split}
        \mathbb{P}(Y + a \leq y)
    &= \mathbb{P}(Y \leq y - a)  \\
    &= \frac{1 - e^{-\lambda(y-a)}} {1 - e^{-\lambda(b-a)}}  \\
    &= \frac{e^{-\lambda a} - e^{-\lambda y }} {e^{-\lambda a} - e^{-\lambda b}},
    \end{split}
    \end{align*}
    which is exactly of the CDF of $\mathsf{Ex}(\lambda) \cdot \mathbf{1}_{\{ a<y<b \}}$.
    
    \item Denote $\lfloor y \rfloor_a = \max\{n\in\mathbb{N}: an \leq y\} $, then $Y\mod a = Y - a \lfloor Y \rfloor_a$. For any $ t > 0$, we have
    \begin{align*}
    \begin{split}
        \mathbb{P}(Y\mod a \leq t)
    &= \mathbb{P}(Y - a \lfloor Y \rfloor_a \leq t ) \\
    &= \sum_{n=0}^\infty \mathbb{P}(Y - an \leq t, \lfloor Y \rfloor_a = n)  \\
    &= \sum_{n=0}^\infty \mathbb{P}(an < Y \leq an + t ) \\
    &= \sum_{n=0}^\infty \left[ e^{-\lambda an} - e^{-\lambda (an + t)}  \right] \\
    &= \frac{1 - e^{-\lambda t}}
    {1 - e^{\lambda a}},
    \end{split}
    \end{align*}
    which is the exactly the CDF of $\mathsf{Ex}(\lambda) \cdot \mathbf{1}_{0 < y< a}$.
\end{enumerate}
\end{proof}

\newpage
\vskip 0.2in
\bibliography{ref}

\end{document}